\numberwithin{equation}{section}
\numberwithin{figure}{section}
\theoremstyle{plain}
\newtheorem{thm}{\protect\theoremname}[section]
  \theoremstyle{plain}
  \newtheorem{prop}[thm]{\protect\propositionname}
  \theoremstyle{definition}
  \newtheorem{defn}[thm]{\protect\definitionname}
  \theoremstyle{plain}
  \newtheorem{lem}[thm]{\protect\lemmaname}
  \providecommand{\definitionname}{Definition}
  \providecommand{\lemmaname}{Lemma}
  \providecommand{\propositionname}{Proposition}
\providecommand{\theoremname}{Theorem}
\begin{document}
\global\long\def\d{\mathrm{d}}
\global\long\def\R{\mathbf{R}}
\global\long\def\N{\mathbf{N}}
\global\long\def\Z{\mathbf{Z}}
\global\long\def\Q{\mathbf{Q}}
\global\long\def\C{\mathbf{C}}
\global\long\def\P{\mathbf{P}}
\global\long\def\cP{\mathcal{P}}
\global\long\def\cM{\mathcal{M}}
\global\long\def\cL{\mathcal{L}}
\global\long\def\cO{\mathcal{O}}

\global\long\def\pder#1#2{\frac{\partial#1}{\partial#2}}
\global\long\def\restr#1{|_{#1}}
\global\long\def\defeq{:=}
\global\long\def\qefed{=:}
\global\long\def\id{\mathrm{id}}
\global\long\def\dual{\vee}

\global\long\def\fW{\mathfrak{W}}
\global\long\def\fU{\mathfrak{U}}
\global\long\def\fP{\mathfrak{P}}
\global\long\def\fL{\mathfrak{L}}
\global\long\def\CZ{\mu_{\mathrm{CZ}}}
\global\long\def\bH{\mathbf{H}}
\global\long\def\bF{\mathbf{F}}
\global\long\def\e{\mathrm{e}}
\global\long\def\ri{\mathrm{i}}

\global\long\def\modspace{\overline{\mathcal{M}}}
\global\long\def\bh{\mathbf{h}}
\global\long\def\bff{\mathbf{f}}
\global\long\def\ev{\mathrm{ev}}
\global\long\def\bu{\mathbf{u}}
\global\long\def\circint{\frac{1}{2\pi}\intop_{0}^{2\pi}}
\global\long\def\varcircint{\intop_{S^{1}}}

\global\long\def\Lnorm#1{\left\Vert #1\right\Vert _{L^{2}}}
\global\long\def\Wnorm#1{\left\Vert #1\right\Vert _{W^{1,2}}}
\global\long\def\tr{\operatorname{Tr}}
\global\long\def\res{\operatorname{Res}}
\global\long\def\Lie{\operatorname{Lie}}
\global\long\def\diag{\operatorname{diag}}

\global\long\def\Lnormdom#1#2{\Vert#1\Vert_{L^{2}(#2)}}
\global\long\def\Wnormdom#1#2{\Vert#1\Vert_{W^{1,2}(#2)}}

\global\long\def\fg{\mathfrak{g}}
\global\long\def\fh{\mathfrak{h}}
\global\long\def\fm{\mathfrak{m}}
\global\long\def\GL{\operatorname{GL}}
\global\long\def\SL{\operatorname{SL}}
\global\long\def\Ad{\operatorname{Ad}}
\global\long\def\ad{\operatorname{ad}}
\global\long\def\aut{\operatorname{Aut}}

\author{Timo Kluck}

\email{T.J.Kluck@uu.nl}

\address{Department of Mathematics, Utrecht University}

\title{On the Calogero-Moser solution by root-type Lax pair}
\begin{abstract}
The `root type Lax pair' for the rational Calogero-Moser system for
any simply-laced root system yields not a solution for the path $q(t)$,
but for the values of the inner products $(\alpha,q(t))$, where $\alpha$
ranges over all roots of the root system. It does not, however, tell
us which value of the inner product corresponds to which root. In
the present paper, we show that the solution is indeed uniquely determined
by these values (up to root system automorphisms) \emph{at almost
all times}. We show by counterexample that it is possible for two
different values of $q$ to yield the same set of values for the inner
products $(\alpha,q)$.

The indeterminacy introduced by the root system automorphisms introduces
the interesting question when the path crosses from one fundamental
domain into another. We present an algebraic approach for constructing
an indicator function containing this information.
\end{abstract}
\maketitle

\section{Introduction}

The rational Calogero-Moser system is a system of a finite number
of particles on a line, whose pairwise interaction potential at distance
$d$ is given by $\nicefrac{1}{d^{2}}$. Given certain initial values
for positions $q=(q_{i})$ and momenta $p=(p_{i})$, one is interested
in finding the coordinates at later times. There is a remarkable way
of solving this: it turns out that the coordinates at time $t$ are
given by the eigenvalues of the matrix
\begin{equation}
W_{0}+tL_{0}\label{eq:W-plus-t-times-L}
\end{equation}
where $W_{0}$ and $L_{0}$ are constructed from the initial values
by 
\begin{eqnarray*}
W_{0} & = & \left(\begin{array}{ccc}
q_{1}\\
 & \ddots\\
 &  & q_{n}
\end{array}\right)\\
L_{0} & = & \left(\begin{array}{ccc}
p_{1} &  & \frac{1}{q_{i}-q_{j}}\\
 & \ddots\\
\frac{1}{q_{i}-q_{j}} &  & p_{n}
\end{array}\right)
\end{eqnarray*}
The particles' paths cannot cross because their interaction potential
is infinite when they meet; so at a given time, the unordered set
of eigenvalues can be ordered from smallest to greatest to obtain
the positions of each particle.

It was observed by Olshanetsky and Perelomov \cite{olshanetsky-} that
this method of solution depends crucially on the property that the
set of linear maps $q\mapsto q_{i}-q_{j}$ forms a root system (namely
the $A_{n}$ root system in the case of $n+1$ particles), and that
similar methods of solution work for systems whose interaction potential
is given by
\[
\frac{1}{2}\sum_{\alpha\in\Phi}\frac{1}{(\alpha,q)^{2}}
\]
for other root systems $\Phi$. In \cite{Bordner-Calo} and \cite{sasaki-Expli},
Bordner, Corrigan, Sasaki and Takasaki introduce a a Lax pair that
works for all irreducible simply-laced root systems. Their matrices
are much bigger, having a row and column for each root in the root
system, and are given by
\begin{eqnarray*}
\left(W_{0}\right)_{\alpha,\beta} & = & \delta_{\alpha,\beta}\cdot(\alpha,q_{0})\\
\left(L_{0}\right)_{\alpha,\beta} & = & \delta_{\alpha,\beta}\cdot(\alpha,p_{0})+\ri\cdot\sum_{\eta\in\Phi}\frac{\delta_{\alpha-\beta,\eta}}{(\eta,q_{0})}+\frac{2\delta_{\alpha-\beta,2\eta}}{(\eta,q_{0})}
\end{eqnarray*}

Again defining $W(t)=W_{0}+tL_{0}$, the set of eigenvalues $\Lambda(t)$
of $W(t)$ turns out to be equal to the multi-set%
\footnote{By a multi-set, we mean a set $X$ together with a map $\mu\colon X\to\Z_{>0}$,
where we interpret the value of $\mu$ as a multiplicity. When it
is clear from the context, we may drop the ``multi'' prefix. %
} of real numbers
\begin{equation}
M(t):=\{(\alpha,q(t))\mid\alpha\in\Phi\}\label{eq:M-is-alpha-q}
\end{equation}
where $q(t)$ is the path of the position coordinates, and $(\cdot,\cdot)$
is the inner product in the ambient Euclidean space $F$ of $\Phi$.
Therefore, once these eigenvalues are known (step 1), and once we
know which eigenvalue corresponds to which root $\alpha$ (step 2),
all that is left is to solve a system of linear equations for $q$
(step 3). It is the second of these steps that presently interests
us.

First of all, it is clear that the set $M(t)$ can only determine
$q(t)$ up to isometries of $F$ that leave $\Phi$ invariant; that
is, up to root system automorphisms. Therefore, we immediately see
that the correspondence is not uniquely defined. Our first task is
to show that this is the only indeterminacy. It turns out that this
is true only generically; we will show, by a counterexample in the
$\Phi=\Phi(A_{5})$ case, that there can be points $q,q^{\prime}$
in $F$ that are in distinct automorphism orbits, but that nevertheless
have
\[
\{(\alpha,q)\mid\alpha\in\Phi\}=\{(\alpha,q^{\prime})\mid\alpha\in\Phi\}
\]

Our next objective is the following. Since we can solve for $q(t)$
only up to the group action, one could say that the natural domain
for $q$ is $F/\aut(\Phi)$ instead of $F$. However, for a physical
system of particles, this is quite unsatisfactory, as we in general
do distinguish initial values even when they are in the same $\aut(\Phi)$-orbit.
(For instance, in the $A_{n}$ case, we do want to distinguish a solution
from its mirror image.) Therefore, we should divide $F$ into fundamental
domains, and we should find out, for a given initial value, at what
time the path will cross the boundary of the fundamental domains.
This, together with the initial value, allows us to resolve the ambiguity
and reconstruct the path $q(t)$ in its entirety.

This task becomes more interesting when we require the following.
Note that finding the eigenvalues of a given matrix involves finding
the zeroes of a polynomial, which cannot, in general, be done in closed
form. We therefore require that we formulate our answer in terms of
the coefficients of the characteristic polynomial $\chi(W(t))$ for
$W(t)$ (for which we do have explicit formulae) and not in terms
of the eigenvalues of $W(t).$

\section{Example: The $A_{2}$ case}

\begin{figure}
\includegraphics[height=0.4\paperwidth]{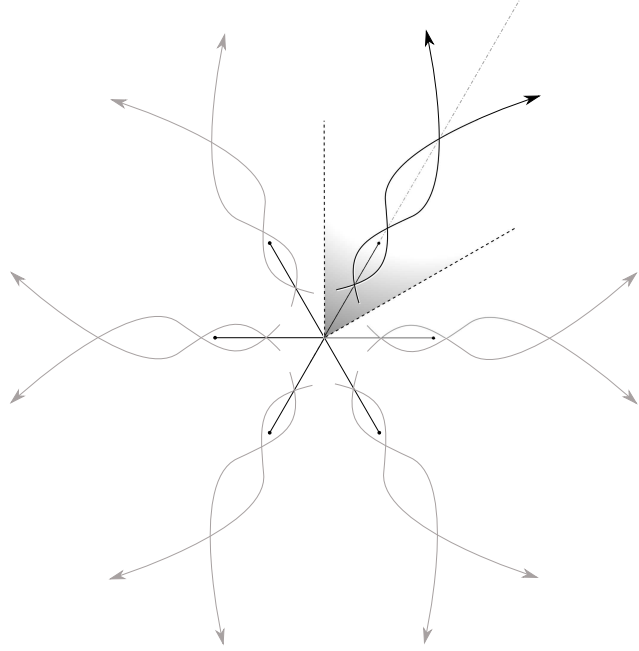}\label{fig:a2-case}

\caption{A set of paths $q(t)$ that are in the same $\aut(\Phi)$-orbit, for
$\Phi=\Phi(A_{2})$. The shaded area is a Weyl chamber. The Dynkin
diagram automorphism acts on the Weyl chamber by reflection though
the dotted line in the middle.}
\end{figure}

Let us discuss the problem that we are trying to solve in the case
where $\Phi$ is the root system associated to $A_{2}$. This root
system is most naturally descibed in the Euclidean subspace $F\subseteq\R^{3}$
satisfying $q_{1}+q_{2}+q_{3}=0$. The root system is given by the
set of vectors $\pm(1,-1,0)$, $\pm(0,1,-1)$, $\pm(1,0,-1)$ in $F$,
and the interaction term in the Hamiltonian is
\[
\frac{1}{2}\sum_{\alpha\in\Phi}\frac{1}{(\alpha,q)}=\sum_{\begin{array}{c}
{\scriptstyle i,j=1}\\
{\scriptstyle i<j}
\end{array}}^{3}\frac{1}{\left(q_{i}-q_{j}\right)^{2}}
\]
The Weyl group is $S_{3}$ and it acts by permuting the coordinates
(which clearly leaves the Hamiltonian invariant), and the nontrivial
Dynkin diagram automorphism acts by sending $q\mapsto-q$. Together,
they generate the root system automorphisms $\aut(\Phi)$. Note that
a Weyl chamber is a fundamental domain for the action of the Weyl
group, whereas either half of the Weyl chamber is a fundamental domain
for the action of the entire $\aut(\Phi)$.

Figure \ref{fig:a2-case} depicts a path $q(t)$ in $F$ and the other
paths in its orbit. The orbit consists of six paths because of the
$S_{3}$ group action alone, and this number is doubled by the Dynkin
diagram automorphism. It is clear that at a given time $t$, all these
values for $q(t)$ yield the same set for $M=\{(\alpha,q)\mid\alpha\in\Phi\}$.
In the $A_{2}$ case, it is easy to see that conversely, two values
giving the same set are in the same $\aut(\Phi)$-orbit: after choosing
a Weyl chamber, the maximal value in $M$ must be associated to the
maximal root. After this, there are only two positive values left,
which can be assigned in exactly two ways to the two positive roots.
In larger root systems, this converse is not so obvious -- in fact,
it is false in general. We will discuss this issue in section \ref{sec:indeterminacy}.

The shaded area is a Weyl chamber, given by $q_{1}<q_{2}<q_{3}$.
It is easy to distinguish paths in different Weyl chambers, because
the Hamiltonian is infinite along the borders; if the initial value
of a path is in a particular Weyl chamber, the path will stay there
for all time. However, this does not allow us to distinguish the two
paths in the same Weyl chamber, related by $(q_{1},q_{2},q_{3})\mapsto(-q_{3},-q_{2},-q_{1})$.
We will tackle this problem as follows. First, we identify a hyperplane
that separates one fundamental domain from the other. In the $A_{2}$
case, this hyperplane is just the line of fixed points, but in general,
the fixed points are only contained in this hyperplane%
\footnote{Here, we restrict to root systems having exactly 2 Dynkin diagram
automorphisms. The case of only a single automorphism is trivial,
so we only exclude the $D_{4}$ case.%
}. Next, we try to identify the times at which the paths cross this
hyperplane. (In general, the paths do not need to intersect each other
as they do in the $A_{2}$ case.) The result obtained in section \ref{sec:Fundamental-domain-crossings}
will be a real polynomial with zeroes exactly where this happens.
Assuming these are simple zeroes, this means that this polynomial
takes positive values at times where we should take one fundamental
domain, and negative values when we should take the other.

\section{Indeterminacy of the solution\label{sec:indeterminacy}}

Let $\Phi\subseteq F$ be an irreducible, simply-laced root system
in a Euclidean space $F$. Simply-laced means that all roots $\alpha$
have $(\alpha,\alpha)=2$; it can be shown that this implies the following
relations that we will use:
\begin{itemize}
\item if $(\alpha,\beta)=-1$, then $\alpha+\beta\in\Phi$;
\item if $(\alpha,\beta)=+1$, then $\alpha-\beta\in\Phi$;
\item otherwise (i.e.\ if $(\alpha,\beta)\in\{-2,0,2\}$), we have $\alpha\pm\beta\notin\Phi$.
\end{itemize}
Suppose that we are given a multi-set $\Lambda$ of real numbers,
and we know that it is equal to some $M$ of the form \eqref{eq:M-is-alpha-q}
(we will drop the time-dependence in our notation in this section).
Another way to say this is that there is a bijection $\phi\colon\Phi\to\Lambda$
such that 
\begin{equation}
\phi(\alpha)=(\alpha,q)\mbox{ for all }\alpha\in\Phi\label{eq:condition-for-matching}
\end{equation}
Here, the word `bijection' should be interpreted as: $\phi$ is a
map from $\Phi$ to the underlying set of $\Lambda$, such that the
size of the preimage of each point is equal to its multiplicity in
$\Lambda$. Given such a bijection, we can solve for $q$. In fact,
$q$ is already determined by its inner products with the simple roots
$\Delta\subseteq\Phi$, since these are $\dim F$ linear equations
for $\dim F$ unknowns.

It is clear that if $\sigma\in\aut(\Phi)$ is a root system automorphism,
then $\phi\circ\sigma$ will be another bijection that satisfies \eqref{eq:condition-for-matching}.
The converse needs proof:
\begin{prop}
\label{prop:main-prop}The following statement is true for generic
$\Lambda$: If $\phi_{1}$ and $\phi_{2}$ are two bijections $\Phi\to\Lambda$
satisfying \eqref{eq:condition-for-matching}, then there is an automorphism
$\sigma\in\aut(\Phi)$ such that $\phi_{1}\circ\sigma=\phi_{2}$.
\end{prop}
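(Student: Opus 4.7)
\emph{Proof proposal.} The plan is to study the bijection $\tau := \phi_1^{-1}\circ\phi_2\colon\Phi\to\Phi$ and to show that, for generic $\Lambda$, it extends to a linear isometry of $F$ stabilizing $\Phi$; the automorphism $\sigma=\tau$ then satisfies $\phi_1\circ\sigma=\phi_2$. Since $\Phi$ spans $F$, each $\phi_i$ determines a unique $q_i\in F$ with $\phi_i(\alpha)=(\alpha,q_i)$, and ``generic $\Lambda$'' will cash out as avoidance by $q_1$ and $q_2$ of the finitely many hyperplanes $v^\perp$ with $0\ne v\in\Phi+\Phi-\Phi$. A first consequence of this genericity (applied to nonzero vectors in $\Phi+\Phi$) is that the values $(\alpha,q_i)$ are nonzero and pairwise distinct modulo the pairing $\alpha\leftrightarrow -\alpha$; combined with $\phi_i(-\alpha)=-\phi_i(\alpha)$, this forces $\tau(-\alpha)=-\tau(\alpha)$.

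The heart of the argument is an additivity relation: if $\alpha+\beta\in\Phi$ (equivalently, $(\alpha,\beta)=-1$ by the simply-laced rules listed above), then $\tau(\alpha)+\tau(\beta)=\tau(\alpha+\beta)$ in $F$. Rewriting the trivial identity $\phi_2(\alpha)+\phi_2(\beta)=\phi_2(\alpha+\beta)$ through $\phi_2=\phi_1\circ\tau$ yields
\[
\bigl(\tau(\alpha)+\tau(\beta)-\tau(\alpha+\beta),\,q_1\bigr)=0,
\]
and the vector on the left lies in $\Phi+\Phi-\Phi$, so by the genericity of $q_1$ it must vanish. Since roots have squared norm $2$, expanding $\|\tau(\alpha+\beta)\|^2=\|\tau(\alpha)+\tau(\beta)\|^2$ then gives $(\tau(\alpha),\tau(\beta))=-1$. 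Running the same argument with the roles of $\phi_1,\phi_2$ swapped (so using the genericity of $q_2$) yields the converse implication for $\tau^{-1}$, hence altogether the equivalence $(\alpha,\beta)=-1\iff(\tau(\alpha),\tau(\beta))=-1$. Using $\tau(-\alpha)=-\tau(\alpha)$ together with the fact that simply-laced root inner products lie in $\{-2,-1,0,1,2\}$, this equivalence promotes to preservation of \emph{all} pairwise inner products on $\Phi$. The main obstacle is this very step: one has to confirm that the genericity assumption, which is natural as a condition on each $q_i$, translates into a full-measure condition on $\Lambda$ itself. This is not serious, since only finitely many bijections $\phi\colon\Phi\to\Lambda$ can satisfy \eqref{eq:condition-for-matching}, so the exceptional $\Lambda$ sit in a finite union of images of proper subvarieties.

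To conclude, I upgrade inner-product preservation into a linear extension. Fix a simple system $\Delta=\{\alpha_1,\dots,\alpha_r\}$ and define $\sigma\colon F\to F$ linearly by $\sigma(\alpha_i)=\tau(\alpha_i)$; preservation of the Gram matrix of $\Delta$ makes $\sigma$ an isometry of $F$. I then verify $\sigma|_\Phi=\tau$ by induction on the height of positive roots. A positive root $\alpha$ of height $\ge 2$ admits a decomposition $\alpha=\beta+\alpha_i$ with $\alpha_i\in\Delta$, $\beta$ a positive root of smaller height, and $(\beta,\alpha_i)=-1$, so by the additivity established above and the inductive hypothesis,
\[
\sigma(\alpha)=\sigma(\beta)+\sigma(\alpha_i)=\tau(\beta)+\tau(\alpha_i)=\tau(\beta+\alpha_i)=\tau(\alpha);
\]
for negative roots one uses $\tau(-\alpha)=-\tau(\alpha)$. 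Thus $\sigma$ is an isometry of $F$ with $\sigma(\Phi)=\tau(\Phi)=\Phi$, so $\sigma\in\aut(\Phi)$ and $\phi_1\circ\sigma=\phi_2$, as desired.
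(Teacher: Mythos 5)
Your proof is correct and follows essentially the same route as the paper: establish additivity of the connecting bijection $\tau$ for generic $\Lambda$, then extend it linearly over a base and verify that it is an isometry of $F$ preserving $\Phi$ via the simply-laced inner-product relations. Your handling of genericity --- direct avoidance of the finitely many hyperplanes $v^{\perp}$ with $0\neq v\in\Phi+\Phi-\Phi$, together with the remark that this transfers to a genericity condition on $\Lambda$ because only finitely many $q$ realize a given $\Lambda$ --- is a cleaner and more explicit rendering of what the paper accomplishes with the kernel of $\Z^{\Phi}\to\Z\cdot\Phi$.
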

Note that in the case where $\Lambda$ has multiple values, the condition
$\phi_{1}\circ\sigma=\phi_{2}$ does not even fix $\sigma$ as a bijection,
so it is not an entirely trivial matter to find a suitable $\sigma$.
The proof will need the following definition and lemma.
\begin{defn}
Let $\sigma\colon\Phi\to\Phi$ be any map. We call $\sigma$ \emph{additive}
if it satisfies these conditions:
\[
\sigma(\alpha)+\sigma(\beta)=\sigma(\alpha+\beta)\mbox{ }
\]
for any value of $\alpha,\beta\in\Phi$ such that $\alpha+\beta$
is a root, and
\[
-\sigma(\alpha)=\sigma(-\alpha)
\]
for any $\alpha\in\Phi$.

It is important to realize that $\Phi$ is not a group under addition.
In particular, this means that the second condition does not follow
from the first.\end{defn}
\begin{lem}
\label{lem:additivity-of-sigma}The following statement is true for
generic $\Lambda$. Suppose $\sigma$ is a bijection $\Phi\to\Phi$
such that $\phi_{1}\circ\sigma=\phi_{2}$. Then $\sigma$ is additive.\end{lem}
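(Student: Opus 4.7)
The plan is to reduce each of the two additivity conditions to a single linear equation saying that a certain vector is orthogonal to a reconstructed position vector $q_{1}$, and then to identify the ``bad'' multi-sets $\Lambda$ as those for which $q_{1}$ happens to lie on one of finitely many explicit hyperplanes in $F$ depending only on $\Phi$.

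Both $\phi_{1}$ and $\phi_{2}$ satisfy \eqref{eq:condition-for-matching}, so each determines a unique vector $q_{i}\in F$ with $\phi_{i}(\alpha)=(\alpha,q_{i})$ for every $\alpha\in\Phi$; uniqueness uses that $\Phi$ spans $F$. The hypothesis $\phi_{1}\circ\sigma=\phi_{2}$ then reads
\[
(\sigma(\alpha),q_{1})=(\alpha,q_{2})\qquad\text{for all }\alpha\in\Phi.
\]
Combining instances of this identity with bilinearity, for any $\alpha,\beta\in\Phi$ with $\alpha+\beta\in\Phi$ we get
\[
\bigl(\sigma(\alpha+\beta)-\sigma(\alpha)-\sigma(\beta),\,q_{1}\bigr)=(\alpha+\beta,q_{2})-(\alpha,q_{2})-(\beta,q_{2})=0,
\]
and for any $\alpha\in\Phi$ we similarly get $\bigl(\sigma(-\alpha)+\sigma(\alpha),\,q_{1}\bigr)=0$.

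In both cases the left-hand factor lies in the finite set
\[
V:=\{\gamma_{1}-\gamma_{2}-\gamma_{3}\mid\gamma_{1},\gamma_{2},\gamma_{3}\in\Phi\}\cup\{\gamma_{1}+\gamma_{2}\mid\gamma_{1},\gamma_{2}\in\Phi\}\subset F,
\]
which depends only on $\Phi$. I would declare $\Lambda$ \emph{generic} exactly when the associated $q_{1}$ avoids the union of the finitely many hyperplanes $v^{\perp}$ for nonzero $v\in V$; this is a Zariski-open, hence dense, condition on $q_{1}$. For such $\Lambda$ the two vanishing statements force $\sigma(\alpha+\beta)=\sigma(\alpha)+\sigma(\beta)$ and $\sigma(-\alpha)=-\sigma(\alpha)$, which is exactly additivity.

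The main (and I think only) obstacle is the passage from the natural ``generic $q_{1}$'' condition to the ``generic $\Lambda$'' language of the statement. Since the map $q\mapsto\{(\alpha,q)\mid\alpha\in\Phi\}$ is polynomial with finite generic fibres (its fibres are contained in $\aut(\Phi)$-orbits at generic points, which is precisely the statement we are working towards), the image of the complement of the bad hyperplanes is itself the complement of a proper algebraic subset of the parameter space of admissible multi-sets, and this is the sense in which the word \emph{generic} should be read. Once this translation is recorded, the proof is finished.
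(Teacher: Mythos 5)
Your argument is correct and rests on the same underlying idea as the paper's proof: a failure of additivity of $\sigma$ would force a nonzero root-lattice vector of bounded size to be orthogonal to a reconstructed position vector, and generic $\Lambda$ avoids the finitely many hyperplanes this would require. The difference is in packaging. The paper routes the argument through the free abelian group $\Z^{\Phi}$, the surjection $\pi$ onto the root lattice, and the criterion $\ker\pi\subseteq\ker\pi\circ\sigma_{*}$, reducing everything to the statement that $\ker\phi_{2*}$ meets the root lattice trivially inside the ball of radius $\sqrt{3}$. You instead subtract three instances of $(\sigma(\alpha),q_{1})=(\alpha,q_{2})$ to get $(\sigma(\alpha+\beta)-\sigma(\alpha)-\sigma(\beta),q_{1})=0$ directly, and likewise for the sign condition. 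Your route is more elementary and arguably cleaner: it sidesteps the paper's somewhat delicate kernel manipulation (the step $\ker\phi_{2*}\circ\pi\circ\sigma_{*}=\ker\pi\circ\sigma_{*}+\sigma_{*}^{-1}\circ\pi^{-1}(\ker\phi_{2*})$), at the cost of writing out the finite set $V$ of potentially obstructing vectors explicitly. The one point to watch is the one you flag yourself: both your argument and the paper's really establish genericity as a condition on the reconstructed vector ($q_{1}$ for you, $q_{2}$ for the paper) rather than on $\Lambda$, and the paper simply elides this translation. Your justification of it is sound but does not need the proposition you are ultimately proving (so the circularity you worry about is not actually present): it suffices that the map sending $q$ to the elementary symmetric functions of the $(\alpha,q)$ is the composite of an injective linear map with a finite morphism, hence has finite fibres, so the image of your union of bad hyperplanes pulls back to a proper closed subset of $F$.
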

\begin{proof}
First of all, note that from \eqref{eq:condition-for-matching}, it
follows in particular that the $\phi_{i}$ are additive, in the sense
that $\phi_{i}(\alpha)+\phi_{i}(\beta)=\phi_{i}(\alpha+\beta)$ and
$\phi_{i}(-\alpha)=-\phi_{i}(\alpha)$. However, this does not imply
that $\phi_{i}^{-1}$ is additive: it is possible that 
\[
\phi_{i}(\alpha)+\phi_{i}(\beta)=\lambda_{1}+\lambda_{2}=\lambda\in\Lambda
\]
 even if $\alpha+\beta$ is not a root. In this case, $\phi_{i}^{-1}(\lambda_{1})+\phi_{i}^{-1}(\lambda_{2})$
cannot equal $\phi_{i}^{-1}(\lambda)$.

We write $\Z^{\Phi}$ for the free abelian group with a set of generators
indexed by $\Phi$. There is a canonical map $\pi\colon\Z^{\Phi}\to\Z\cdot\Phi$
to the root lattice, whose kernel $\ker\pi$ contains exactly the
additivity relations. The map $\sigma$ induces a map $\sigma_{*}\colon\Z^{\Phi}\to\Z^{\Phi}$.
It is easy to see that $\sigma$ is additive if and only if $\sigma_{*}$
maps $\ker\pi$ to itself, in other words, if and only if
\[
\ker\pi\subseteq\ker\pi\circ\sigma_{*}
\]

Now let us consider the $\phi_{i}$. We see that each extends linearly
to a map $\phi_{i*}\colon\Z\cdot\Phi\to\R$ (we use their additivity
here). The statement that $\phi_{1}\circ\sigma=\phi_{2}$ implies
that we have the following commutative diagram:
\[
\xymatrix{\Phi\ar[d]^{\sigma}\ar@{^{(}->}[r] & \Z^{\Phi}\ar[r]^{\pi}\ar[d]^{\sigma_{*}} & \Z\cdot\Phi\ar[r]^{\phi_{1*}}\ar@{-->}[d] & \R\ar@{=}[d]\\
\Phi\ar@{^{(}->}[r] & \Z^{\Phi}\ar[r]^{\pi} & \Z\cdot\Phi\ar[r]^{\phi_{2*}} & \R
}
\]
The dotted arrow is a map that exists if and only if $\sigma$ is
additive.

We see from the diagram that 
\[
\ker\phi_{1*}\circ\pi=\ker\phi_{2*}\circ\pi\circ\sigma_{*}
\]
 This means that
\begin{eqnarray*}
\ker\pi & \subseteq & \ker\phi_{1*}\circ\pi\\
 & = & \ker\phi_{2*}\circ\pi\circ\sigma_{*}\mbox{ (by our observation)}\\
 & = & \ker\pi\circ\sigma_{*}+\sigma_{*}^{-1}\circ\pi^{-1}\left(\ker\phi_{2*}\right)
\end{eqnarray*}
so it is sufficient if we can prove that $\ker\phi_{2*}$ is trivial.

Note that $\ker\pi$ is generated by linear combinations of at most
3 generators. This means that it is actually sufficient to show that
$\sigma_{*}^{-1}\circ\pi^{-1}\left(\ker\phi_{2*}\right)$ does not
contain elements that small. $ $In fact, $\sigma_{*}$ preserves
norms and $\pi$ only makes them smaller, so it is sufficient if $\ker\phi_{2*}$
does not contain elements of length smaller than $\sqrt{3}$.

Now, $\ker\phi_{2*}$ is a hyperplane of codimension 1 in the ambient
space $F$ of the root lattice $\Z\circ\Phi$. For generic values
of $\phi_{2}$'s coefficients $\Lambda$, this hyperplane has trivial
intersection with 
\[
\Z\cdot\Phi\cap\{x\in F\mid\left|x\right|\le\sqrt{3}\}
\]
This means that 
\[
\ker\pi\subseteq\ker\pi\circ\sigma_{*}
\]
 for generic $\Lambda$. This implies that $\sigma$ is additive.
\end{proof}

\begin{proof}[Proof of proposition \ref{prop:main-prop}]

Choose any $\sigma$ that satisfies $\phi_{1}\circ\sigma=\phi_{2}$,
and choose a a base $\Delta\subseteq\Phi$ for the root system. There
is a unique linear map $L\colon F\to F$ that extends $\sigma$ on
$\Delta$. Because $\sigma$ is additive, we see that $L$ actually
extends $\sigma$ on all of $\Phi$. We will now show that $L$ is
an isometry. We can check this on $\Delta$ because these roots span
$F$. Because $L$ extends a permutation of the roots, it is clear
that $L\alpha\pm L\beta\in\Phi$ if and only if $L(\alpha\pm\beta)\in\Phi$.
But in light of the relations above, this means that 
\[
(L\alpha,L\beta)=(\alpha,\beta)
\]
so $L$ is an isometry and $\sigma$ is an automorphism of $\Phi$.
\end{proof}
Now remember that the Hamiltonian has a pole along $(\alpha,q)=0$
for every root $\alpha$, that is, along the boundaries of the Weyl
chambers. This means that if the initial value for $q$ is in a certain
Weyl chamber, then it will stay there for all $t$. This means that
we can fix a set of positive roots corresponding to the Weyl chamber,
and use an element of the Weyl group $w\in W\subseteq\aut(\Phi)$
to make sure our bijection sends positive roots $\alpha$ to positive
values of $(q,\alpha)$. This will make sure that the corresponding
solution for $q$ is in the right Weyl chamber. This proves the following
\begin{prop}
For generic $\Lambda$, there are exactly $[\aut(\Phi):W]$ possibilities
for the solutions for $q$. They are related by the action of the
Dynkin diagram automorphisms on $F$.
\end{prop}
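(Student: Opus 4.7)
The plan is to use Proposition \ref{prop:main-prop} to reduce the count to a coset-counting problem in $\aut(\Phi)$, and then to identify the leftover indeterminacy with the Dynkin diagram automorphisms. Fix a reference bijection $\phi_{0}\colon\Phi\to\Lambda$ satisfying \eqref{eq:condition-for-matching} and let $q_{0}\in F$ be its corresponding solution. By Proposition \ref{prop:main-prop}, every other valid bijection has the form $\phi_{0}\circ\sigma$ for some $\sigma\in\aut(\Phi)$. Using the linear isometry extension from the proof of that proposition, the $q$ associated to $\phi_{0}\circ\sigma$ is precisely $q_{\sigma}\defeq\sigma^{-1}(q_{0})$, where $\sigma^{-1}$ is now regarded as an isometry of $F$.

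Next I would impose the Weyl chamber constraint. Since the initial value determines a Weyl chamber $C$ in which $q(t)$ stays for all time, we may fix once and for all the corresponding set of positive roots $\Phi^{+}$. Writing $\Phi_{q_{0}}^{+}\defeq\{\beta\in\Phi\mid(\beta,q_{0})>0\}$, the condition $q_{\sigma}\in C$ becomes $(\alpha,\sigma^{-1}(q_{0}))>0$ for every $\alpha\in\Phi^{+}$, i.e.\ $\sigma(\Phi^{+})=\Phi_{q_{0}}^{+}$. The Weyl group acts simply transitively on positive systems, so there exists a unique $w\in W$ with $w(\Phi^{+})=\Phi_{q_{0}}^{+}$; the valid $\sigma$ are therefore exactly the elements of the coset $w\cdot\Gamma$, where $\Gamma\defeq\{\gamma\in\aut(\Phi)\mid\gamma(\Phi^{+})=\Phi^{+}\}$ is precisely the group of Dynkin diagram automorphisms. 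In particular, the number of valid $\sigma$ is $|\Gamma|=[\aut(\Phi):W]$.

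Finally I would check that, generically, distinct $\sigma$ in this coset yield distinct $q_{\sigma}$. If $\sigma_{1}^{-1}(q_{0})=\sigma_{2}^{-1}(q_{0})$, then $\sigma_{2}\sigma_{1}^{-1}$ stabilises $q_{0}$, which for generic $\Lambda$ forces $\sigma_{1}=\sigma_{2}$ since the stabiliser of a generic point in $F$ under $\aut(\Phi)$ is trivial. Writing $\sigma=w\gamma$ with $\gamma\in\Gamma$, we get $q_{\sigma}=\gamma^{-1}w^{-1}(q_{0})=\gamma^{-1}(q_{\mathrm{id}\cdot w})$, so the $|\Gamma|$ solutions in $C$ are related to one another by the action of $\Gamma$ on $F$, establishing the second assertion.

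The main obstacle I anticipate is the bookkeeping around the isometry extension: one must pass cleanly from $\sigma\in\aut(\Phi)$ (a combinatorial object) to its associated isometry of $F$, and verify that the identity $(\alpha,\sigma^{-1}(q_{0}))=(\sigma(\alpha),q_{0})$ used above is exactly the one supplied by the linear map $L$ constructed in the proof of Proposition \ref{prop:main-prop}. Once this identification is made, and the genericity of $q_{0}$ (trivial stabiliser, no accidental zero inner products with roots outside $\Phi$) is noted, the counting argument itself is essentially formal.
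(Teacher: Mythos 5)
Your proposal is correct and follows essentially the same route as the paper: invoke Proposition \ref{prop:main-prop} to reduce to the $\aut(\Phi)$-orbit, use the fact that $q(t)$ is confined to one Weyl chamber to normalise by a unique Weyl group element, and identify the residual coset with the group of Dynkin diagram automorphisms of order $[\aut(\Phi):W]$. The paper compresses this into a single paragraph preceding the statement; your version merely adds the explicit coset bookkeeping and the (correct, and in fact needed for the word ``exactly'') observation that a generic $q_{0}$ has trivial stabiliser in $\aut(\Phi)$, so distinct $\sigma$ yield distinct solutions.
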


\section{A non-generic counter example}

The question remains whether the proof of lemma \ref{lem:additivity-of-sigma}
can be made to work for all, instead of just generic, $\Lambda$.
The answer to this question is negative, at least for the case $\Phi=\Phi(A_{5})$.
In this case, consider the vectors
\begin{eqnarray*}
q & = & (-28,-22,-16,8,20,38)\\
q^{\prime} & = & (-34,-28,2,8,20,32)
\end{eqnarray*}
which are not mirror-images and therefore not related by a Dynkin
diagram automorphism. Then we can check that
\[
\{(\alpha,q)\mid\alpha\in\Phi(A_{5})\}=\{(\alpha,q^{\prime})\mid\alpha\in\Phi(A_{5})\}
\]
For reference, both are equal to the multi-set
\begin{equation}
\pm\{6,6,12,12,18,24,30,30,36,36,42,48,54,60,66\}\label{eq:alpha-q-counter-example}
\end{equation}

This counter-example was produced by computer-search%
\footnote{The search was conducted using the open-source software Sage \cite{Sage}.
Source code for the search program is available from the author's
website.%
}. A non-exhaustive search for counterexamples in other small root
systems $D_{4}$,$D_{5}$,$A_{6}$ and $E_{8}$ did not yield any
other examples.

A natural question is how the path evolves when we take these two
points as initial values. The two different $W_{0}$ are diagonal
matrices that have the numbers \eqref{eq:alpha-q-counter-example}
as eigenvalues, but on different rows. There is no similar relation
between the two different $L_{0}$, because the map on $\Phi$ induced
by the correspondence does not respect additivity relations. Therefore,
there is no reason to suspect that the two solutions remain related
in any way. This is illustrated in figure \ref{fig:Time-evolution}.
\begin{figure}
\includegraphics{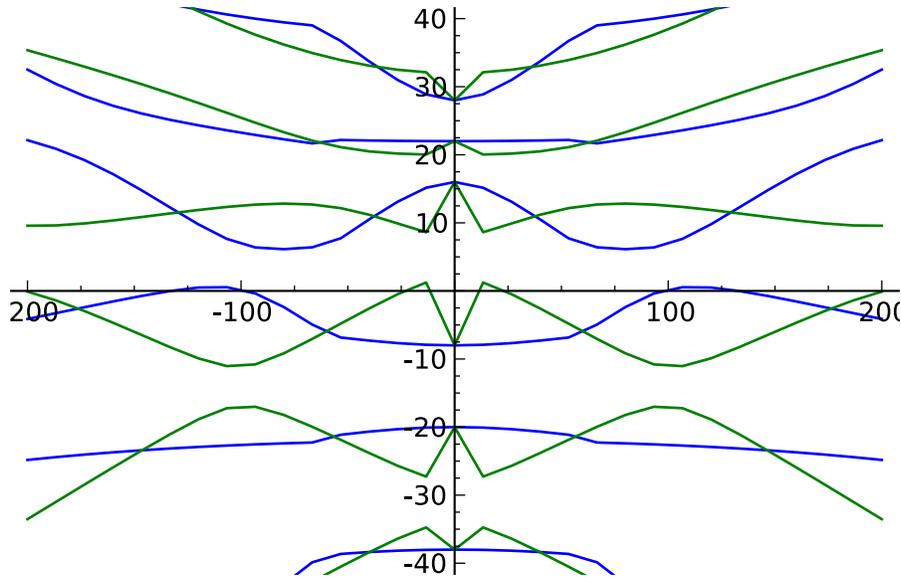}

\caption{Time evolution of the two systems, as calculated by the algorithm
described in this article. At time $t=0$, the initial values $q$
(blue) and $q^{\prime}$ (green) are chosen, together with $p=0$.
The algorithm clearly makes a mistake at $t=0$ where it picks the
`wrong' matching between eigenvalues and roots, namely the one leading
to the other initial condition. The plot is symmetric under $t\mapsto-t$
because when $p=0$, the two matrices $W(t)$ and $W(-t)$ are Hermitean
conjugates and therefore have the same eigenvalues.\label{fig:Time-evolution}}
\end{figure}

\section{Fundamental domain crossings\label{sec:Fundamental-domain-crossings}}

Let us now exclude the cases $E_{7},E_{8}$ (which only have the trivial
diagram automorphism) and the case $D_{4}$ (which has 6 automorphisms)
so that we have exactly 2 solutions for $q$ at every time $t$. In
other words, we have a path in the quotient space 
\[
\mbox{Weyl chamber}/\mbox{Dynkin diagram automorphism}
\]
We would like to separate the two ``lifted'' paths in the Weyl chamber.
One way of doing this is to split the Weyl chamber into two fundamental
domains, and finding out at what time $q(t)$ passes the boundary
from one fundamental domain to another. The Dynkin diagram automorphism
corresponds to a linear isometry of order 2, so we can choose the
fundamental domains as being the two sides of any codimension 1 hyperplane
containing its fixed points%
\footnote{To see this: if two points are on the same side of the hyperplane
and are mapped to each other, then their sum is also on the same side
of the hyperplane, but it is a fixed point.%
}. In fact, we can always take the hyperplane to be the equidistant
hyperplane between two simple roots forming an orbit. For example,
in the $A_{2}$ case, we can choose the boundary of the fundamental
domains to be the dotted line in figure \ref{fig:a2-case}. We then
want to find the times $t$ at which the paths cross the boundary.
We have seen in the $A_{2}$ case that they must cross each other
because the boundary is actually fixed by the automorphism. In general,
however, the boundary need only be mapped into itself, so the paths
can cross the boundary at distinct points.

We would like to express the fact that $q$ is on the boundary by
looking at the values of $(\alpha,q)$. Then because the boundary
is an equidistant hyperplane, we see that there must be double values.
The converse is false, however: there are several more equidistant
hyperplanes, but not all of them are between roots in an orbit, and
from those that are, we have only chosen one as the boundary of the
fundamental domain. Also remember that we want to find a condition
on the coefficients of the characteristic polynomial of $W(t)$; that
is, on the symmetric functions of the $(\alpha,q)$, and not on the
$(\alpha,q)$ themselves.

\subsection{An algebraic condition for the crossing}

There is a standard way of expressing certain conditions on the zeroes
of a polynomial as conditions on coefficients. Let us illustrate this
method by recalling the definition of the discriminant of a polynomial.
Consider a polynomial
\[
\lambda^{m}+a_{m-1}\lambda^{m-1}+\cdots+a_{0}=\prod_{i}(\lambda-\lambda_{i})
\]
We can express the condition that this polynomial has a double zero
(that is, there is $i\neq j$ with $\lambda_{i}=\lambda_{j}$) by
requiring the vanishing of the following expression
\[
\prod_{\sigma\in S_{m}}\left(\lambda_{\sigma(1)}-\lambda_{\sigma(2)}\right)
\]
which is symmetric in the $\lambda_{i}$ and can therefore also be
expressed in the $a_{i}$. This expression is just a power of the
determinant.

Let us apply this to the the condition that $q$ is at the boundary
of a fundamental domain. This is a condition on the zeroes of the
form:
\begin{quote}
There is a way of assigning the $\lambda_{i}$ to roots $\alpha_{\phi(i)}$
such that (1) the $\lambda_{i}$ satisfy the additivity properties
of the roots and (2) they have the same value on two specified roots.
\end{quote}
For example, in the $A_{2}$ case, we want the simultaneous vanishing
of these expressions:

\begin{eqnarray}
\lambda_{1}+\lambda_{2}-\lambda_{3} &  & \mbox{(one root is the sum of two other roots)}\label{eq:additivity-rels-A2}\\
\lambda_{1}-\lambda_{2} &  & \mbox{(those two other roots have the same value) }\nonumber \\
\lambda_{1}+\lambda_{4} &  & \mbox{(the roots have mirror images)}\nonumber \\
\lambda_{2}+\lambda_{5} &  & \mbox{(idem)}\nonumber \\
\lambda_{3}+\lambda_{6} &  & \mbox{(idem)}\nonumber 
\end{eqnarray}
The second of these corresponds to (2), and the others correspond
to (1). Of course, vanishing of a simultaneous permutation is also
allowed, since that corresponds to a different way of assigning roots
to the $\lambda_{i}$. The simultaneous vanishing of any simultaneous
permutation can be encoded in the vanishing of the following expression:
\begin{multline*}
\prod_{\sigma\in S_{6}}\left((\lambda_{\sigma(1)}-\lambda_{\sigma(2)})+y_{1}(\lambda_{\sigma(1)}+\lambda_{\sigma(2)}-\lambda_{\sigma(3)})\right.\\
\left.+y_{2}(\lambda_{\sigma(1)}+\lambda_{\sigma(4)})+y_{3}(\lambda_{\sigma(2)}+\lambda_{\sigma(5)})+y_{4}(\lambda_{\sigma(3)}+\lambda_{\sigma(6)})\right)
\end{multline*}
\emph{identically in the helper variables $y_{1},\cdots,y_{4}$}.
Taking the product over all permutations makes that the condition
is symmetric in the $\lambda_{i}$, allowing us the express it in
the $a_{i}$. Note that the identical vanishing will give one condition
$c_{[y]}(a_{0},\cdots,a_{m})$ for every monomial $[y]$ in $y_{1},\cdots,y_{4}$.

Now in general, suppose the additivity relations (such as \eqref{eq:additivity-rels-A2}
in the $A_{2}$ case) take the form $f_{j}(\lambda_{1},\cdots,\lambda_{m})$
for $j$ in some index set $J$, and suppose that the boundary hyperplane
is equidistant to $\alpha_{1},\alpha_{2}$. Then we are interested
in the vanishing of the following expression:
\begin{equation}
\prod_{\sigma\in S_{m}}\left(\lambda_{\sigma(1)}-\lambda_{\sigma(2)}+\sum_{j\in J}y_{j}f_{j}(\lambda_{\sigma(1)},\cdots,\lambda_{\sigma(m)})\right)\label{eq:extended-determinant}
\end{equation}
identically in the helper variables $y_{j}$\emph{.} Again, this condition
can be expressed in the $a_{i}$, and we obtain one condition $c_{[y]}(a_{0},\cdots,a_{m})$
for every monomial $[y]$ in the $y_{j}$. In our case, the coefficients
$a_{0},\cdots,a_{m}$ are the coefficients of the characteristic polynomial
of $W(t)$. That means that we have explicit formulae $a_{0}=a_{0}(t),\cdots,a_{m}=a_{m}(t)$.
These are polynomials in $t$ with coefficients in $\Q(p_{0},q_{0})$,
where $p_{0}$ and $q_{0}$ are the initial values.

Since $\Q(p_{0},q_{0})[t]$ is a unique factorization domain, there
is a well-defined greatest common divisor $c$ of all the $c_{[y]}(a_{0}(t),\cdots,a_{m}(t))$.
This greatest common divisor vanishes exactly when $q$ is on the
boundary of the fundamental domain at time $t$. This is a real polynomial
with zeroes exactly at boundary crossings. Assuming the zeroes are
simple, we can interpret it as an indicator function that is positive
at times where the path is in one fundamental domain, and negative
when it is in the other.

\subsection{Feasibility of the computation}

The computation just described is unfeasible, even for the smallest
of root systems. In the case $A_{2}$, we have $6$ roots, so the
expression \eqref{eq:extended-determinant} is a homogeneous polynomial
of degree $6!=720$ in the 6 $\lambda_{i}$-variables, which means
it has ${720+5 \choose 5}$ summands, a 13 digit number.

A polynomial whose computation is a lot closer to being within reach
is the discriminant $\delta$ of the characteristic polynomial of
$W(t)$. It is zero exactly at times $t$ when $\Lambda$ has double
values. In particular, it is zero when $q(t)$ crosses a boundary.
We find that $c$ is a factor in $\delta$. Furthermore, because we
have just shown that $c$ is a polynomial in $\Q(p_{0},q_{0})[t]$,
we can obtain $c$ by factoring $\delta$ over $\Q(p_{0},q_{0})$.
When we fix rational values for the initial values $p_{0}$ and $q_{0}$,
this is a factorization over $\Q$ and so it is a finite computation.

As an example of this procedure, let $\Phi=\Phi(A_{2})$ and let the
initial values be given by $q_{0}=(\frac{6}{10},-\frac{1}{10},-\frac{1}{2})$
and $p_{0}=(\frac{1}{10},-\frac{1}{10},0)$. Because these are rational
values, $\Q(p_{0},q_{0})$ is just equal to $\Q$. The charateristic
polynomial of $W(t)$ with these initial conditions can be computed
to be equal to
\begin{multline*}
\lambda^{6}+\left(-\frac{7763475}{11858}\, t^{2}+42\, t-\frac{93}{50}\right)\lambda^{4}+\\
\left(\frac{60271544075625}{562448656}\, t^{4}-\frac{23290425}{1694}\, t^{3}+\frac{49797639}{47432}\, t^{2}-\frac{1953}{50}\, t+\frac{8649}{10000}\right)\lambda^{2}+\\
\left(-\frac{17065397825724953125}{3334758081424}\, t^{6}+\frac{5719079645625}{5021863}\, t^{5}-\frac{16356434361825}{281224328}\, t^{4}\right.\\
\left.-\frac{3061123}{1694}\, t^{3}+\frac{235613523}{2371600}\, t^{2}+\frac{3003}{1250}\, t-\frac{5929}{62500}\right)
\end{multline*}
The discriminant of this polynomial is equal to
\begin{gather*}
k\cdot(t^{3}+\frac{44921}{51450}t^{2}-\frac{121}{1875}t+\frac{121}{218750})^{4}\cdot(\mbox{a large polynomial of degree 6 without real roots})^{3}
\end{gather*}
for some large constant $k$. A numerical approximation of the solution
shows that we expect that the fundamental domain border is crossed
three times. This allows us to identify the factor of degree 3 as
the indicator function that is positive when the solution is in one
fundamental domain, and negative when it is in the other. This is
a less rigorous way because it involves comparison of an exact result
with a numerical approximation, but it is at least feasible.

\section{Conclusion}

We have shown that the solution to the Calogero-Moser system by root-type
Lax pair is complete up to root system automorphism, in the sense
that the data it yields determines the solution at almost all times
(and therefore by continuity at all times); however we have shown
by example that ambiguity can occur at isolated points.

Next, in the cases where there are exactly two different Dynkin diagram
automorphisms (so in all cases different from $D_{4},E_{7},E_{8}$),
we have given a way of distinguishing the two paths in the Weyl chamber
by means of an indicator function, whose construction is entirely
algebraic. However, its computation is infeasible even for tiny examples,
but we have also indicated a less rigorous way of obtaining it in
a much less computationally expensive way.

It is known that in cases different from $E_{8}$, an alternative
Lax pair is available (called the minimal Lax pair in \cite{Bordner-Calo})
that yields not values for $(\alpha,q)$, but for $(\lambda,q)$ where
$\lambda$ runs over the fundamental weights. It would be interesting
to see if similar steps are necessary to ensure that the resulting
data completely fix the solution.

\section{Acknowledgements}

This research was helped by computer exploration using the open-source
mathematical software Sage \cite{Sage} and its algebraic combinatorics
features developed by the Sage-Combinat \cite{sage-combinat} community.
In particular, the work by Mike Hansen, Justin Walker and Nicolas\ M.\ Thiery
on root systems was very helpful.

Source code for an implementation of the above algorithm in Sage is
available from the author's website%
\footnote{This website can be found at \texttt{www.staff.science.uu.nl/\textasciitilde{}kluck103/}%
}.

This research was supported by the Utrecht University program `Foundations
of Science'.

\bibliographystyle{amsplain}
\bibliography{references-tjk}

\bigskip{}

\end{document}